\providecommand{\U}[1]{\protect\rule{.1in}{.1in}}
\newtheorem{thm}{Theorem}
\begin{document}

\title{\protect\vspace{-1.6cm}Dispersionless (3+1)-dimensional integrable hierarchies}
\author{Maciej B\l aszak$^{a}$ and Artur Sergyeyev$^{b}$\\$^{a}$~Faculty of Physics, A. Mickiewicz University, \\Umultowska 85, 61-614 Pozna\'n, Poland\\\texttt{blaszakm@amu.edu.pl}\\[3mm] $^{b}$~Mathematical Institute, Silesian University in Opava,\\Na Rybn\'\i{}\v{c}ku 1, 746 01 Opava,~Czech Republic\\\texttt{Artur.Sergyeyev@math.slu.cz}}
\maketitle

\begin{abstract}\protect\vspace{-1.2cm}
In the present paper we introduce a multi-dimensional version of the
$R$-matrix approach to the construction of integrable hierarchies. Applying
this method to the case of the Lie algebra of functions with respect to the contact bracket, we construct integrable
hierarchies of (3+1)-dimensional dispersionless systems of the type recently
introduced by one of us in \cite{s}.\vspace{-0.3cm}
\end{abstract}

\section{Introduction}

Integrable systems are well known to play a prominent role in modern theoretical and mathematical physics,
including quantum field theory and string theory, cf.\ e.g.\ \cite{a, ba, bl,
bla1, bo, br, c, dun, sem2, w, tu}. The $R$-matrix approach, see e.g.\ \cite{bl, bla1,
sem2} and references therein, is one of the most general and best known
constructions of such systems. In this approach integrable systems result from
the Lax equations on suitably chosen Lie algebras. The key advantage of this
method is the possibility of systematic construction of infinite hierarchies
of symmetries, conserved quantities and respective Hamiltonian, or rather
multi-Hamiltonian, structures, see e.g.\ the recent surveys \cite{bla1,sem2}.\looseness=-1

More than three decades of experience show that this approach, as well as
other methods, works perfectly in (1+1) dimensions and admits an extension to
(2+1) dimensions, see for example \cite{bl, bla2, bla3, t, F1,F2}. However, to
the best of our knowledge, all earlier attempts at extending these methods to
higher dimensions failed. In particular, until recently only isolated examples
of (3+1)-dimensional integrable partial differential systems were known,
cf.\ e.g.\ \cite{dun} and references therein.

A significant advance in this direction was made in \cite{s},
where a novel systematic construction of (3+1)-dimensional
integrable dispersionless systems was found. To explain this
construction, recall that zero-curvature equations involving the
Poisson bracket with one degree of freedom give rise to
(2+1)-dimensional dispersionless systems, see for example
\cite{bla3,s,sz,z}. Roughly speaking, the key insight of \cite{s}
is to replace the Poisson bracket by the contact bracket in the
zero-curvature equations in question. Then these equations yield
(3+1) rather than (2+1)-dimensional systems. This approach gives
rise to broad new classes of (3+1)-dimensional dispersionless
integrable systems along with their Lax pairs.\looseness=-1

Motivated by the results of \cite{s}, we present below a multi-dimensional
version of the $R$-matrix approach on appropriately chosen Lie algebras. In
contrast with the standard version of the $R$-matrix method, we drop the requirement that the Lie
algebras under study admit, in addition to the Lie bracket, an associative
multiplication such that the adjoint action associated with the Lie bracket is
a derivation (that is, this action obeys the Leibniz rule) with respect to the
said multiplication.
Unfortunately, in this case there appears to be no natural Hamiltonian
structure on the dual Lie algebra, and thus no systematic method for
constructing Hamiltonian representations for the systems under study is available.

In the particular setting introduced in \cite{s} and considered in
Section~\ref{(3+1)}, the Lie algebras belong to the class of Jacobi algebras
which represent a natural generalization of the Poisson algebras. Even though
the Jacobi algebras by definition admit an associative multiplication in
addition to the Lie bracket, the adjoint action associated with the Lie
bracket is not a derivation; instead it obeys a certain generalization of the
Leibniz rule presented in Section~\ref{scb}. The systems in question are
integrable in the sense of existence of infinite hierarchies of commuting
symmetries, and the construction of these hierarchies is given below. Note
also that infinite hierarchies of nonlocal conservation laws for the systems
under study could be obtained using the construction of nonisospectral Lax pairs from
\cite{s} applied to our systems.\looseness=-1

Using the $R$-matrix approach with suitably relaxed assumptions presented in
Section~\ref{rma}, in Section~\ref{(3+1)} we construct infinite hierarchies of
integrable dispersionless (3+1)-dimensional systems with infinitely many
dependent variables associated with the contact bracket which is discussed in
Section~\ref{scb}. Finally, some natural finite-component reductions of our systems are
presented in Section~\ref{fc}.

\section{The general $R$-matrix construction of integrable hierarchies}

\label{rma}

Let $\mathfrak{g}$ be an (infinite-dimensional) Lie algebra. The Lie bracket
$[\cdot,\cdot]$ defines the adjoint action of $\mathfrak{g}$ on $\mathfrak{g}%
$: $\operatorname{ad}_{a}b=[a,b]$.

Recall, see e.g.\ \cite{sem, bla1} and references therein, that an
$R\in\mathrm{End}(\mathfrak{g})$ is called a (classical) $R$-matrix if the
$R$-bracket
\begin{equation}
\lbrack a,b]_{R}:=[Ra,b]+[a,Rb] \label{2.2}%
\end{equation}
is a new Lie bracket on $\mathfrak{g}$. The skew symmetry of (\ref{2.2}) is
obvious. As for the Jacobi identity for (\ref{2.2}), a sufficient condition
for it to hold is the so-called classical modified Yang--Baxter equation for
$R$,
\begin{equation}
\lbrack Ra,Rb]-R[a,b]_{R}-\alpha\lbrack a,b]=0,\qquad\alpha\in\mathbb{R}.
\label{2.3}%
\end{equation}

Let $L_{i}\in\mathfrak{g}$, $i\in\mathbb{N}$. Consider the associated
hierarchies of flows (Lax hierarchies)
\begin{equation}
(L_{n})_{t_{r}}=[RL_{r},L_{n}],\qquad r,n\in\mathbb{N}. \label{2.5}%
\end{equation}
We have the following

\begin{thm}
\label{t1} Suppose that $R$ is an $R$-matrix on $\mathfrak{g}$ which commutes
with all derivatives $\partial_{t_{n}}$, i.e.,
\begin{equation}
(RL)_{t_{n}}=RL_{t_{n}},\quad n\in\mathbb{N}, \label{2.6}%
\end{equation}
and obeys the classical modified Yang--Baxter equation (\ref{2.3}) for
$\alpha\neq0$. Let $L_{i}\in\mathfrak{g}$, $i\in\mathbb{N}$ satisfy (\ref{2.5}).

Then the following conditions are equivalent:

\begin{itemize}
\item[i)] the zero-curvature equations%
\begin{equation}
(RL_{r})_{t_{s}}-(RL_{s})_{t_{r}}+[RL_{r},RL_{s}]=0,\quad r,s\in\mathbb{N}
\label{2.7}%
\end{equation}
hold;

\item[ii)] all $L_{i}$ commute in $\mathfrak{g}$:
\begin{equation}
[L_{i},L_{j}]=0, \qquad i,j\in\mathbb{N}. \label{2.4}%
\end{equation}

\end{itemize}

Moreover, if one (and hence both) of the above equivalent conditions holds,
then the flows (\ref{2.5})
commute, i.e.,
\begin{equation}
((L_{n})_{t_{r}})_{t_{s}}-((L_{n})_{t_{s}})_{t_{r}}=0, \quad n,r,s\in
\mathbb{N}. \label{2.8}%
\end{equation}

\end{thm}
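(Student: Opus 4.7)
The plan is to reduce the whole statement to a single algebraic identity, namely
\[Z_{rs} \;:=\; (RL_r)_{t_s} - (RL_s)_{t_r} + [RL_r, RL_s] \;=\; \alpha\,[L_r, L_s].\]
Once this identity is in hand, the equivalence of (i) and (ii) is immediate: under the hypothesis $\alpha \neq 0$, the vanishing of $Z_{rs}$ for all $r,s$ is equivalent to $[L_r, L_s] = 0$ for all $r,s$.

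To establish the identity I would first invoke the assumption (\ref{2.6}) that $R$ commutes with $\partial_{t_n}$, which lets me rewrite $(RL_r)_{t_s}$ as $R(L_r)_{t_s}$. Substituting the Lax evolution (\ref{2.5}) in the form $(L_r)_{t_s} = [RL_s, L_r]$ and using skew-symmetry, I get $(RL_r)_{t_s} = -R[L_r, RL_s]$, and similarly $(RL_s)_{t_r} = R[RL_r, L_s]$. Collecting terms,
\[Z_{rs} \;=\; [RL_r, RL_s] - R\bigl([RL_r, L_s] + [L_r, RL_s]\bigr) \;=\; [RL_r, RL_s] - R[L_r, L_s]_R,\]
which, by the modified Yang--Baxter equation (\ref{2.3}) applied to $a = L_r$, $b = L_s$, equals $\alpha[L_r, L_s]$, as claimed.

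For the final statement on commutativity of the flows, I would differentiate (\ref{2.5}) once more, using (\ref{2.5}) again inside the inner derivative, to obtain
\[((L_n)_{t_r})_{t_s} - ((L_n)_{t_s})_{t_r} \;=\; \bigl[(RL_r)_{t_s} - (RL_s)_{t_r},\, L_n\bigr] + \bigl[RL_r, [RL_s, L_n]\bigr] - \bigl[RL_s, [RL_r, L_n]\bigr].\]
The Jacobi identity in $\mathfrak{g}$ combines the last two iterated brackets into $\bigl[[RL_r, RL_s],\, L_n\bigr]$, so the whole right-hand side collapses to $[Z_{rs}, L_n]$, which vanishes under either (i) (trivially) or (ii) (via the key identity above), yielding (\ref{2.8}).

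The argument is a short algebraic computation, and I do not anticipate any genuine obstacle beyond careful bookkeeping: one must track the sign flips coming from skew-symmetry so that the two $R[\cdot,\cdot]$ contributions assemble into the single $R$-bracket $R[L_r, L_s]_R$ on which the mYBE can be invoked. Note that the hypothesis $\alpha \neq 0$ is used only to invert $Z_{rs} = \alpha[L_r, L_s]$ in the passage from (i) to (ii); the reverse implication and the commutativity of flows (\ref{2.8}) are valid for arbitrary $\alpha$.
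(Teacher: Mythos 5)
Your proposal is correct and follows essentially the same route as the paper: reduce the left-hand side of (\ref{2.7}) to $[RL_r,RL_s]-R[L_r,L_s]_R$ via (\ref{2.6}) and (\ref{2.5}), invoke the modified Yang--Baxter equation to get a multiple of $\alpha[L_r,L_s]$, and then derive (\ref{2.8}) by differentiating (\ref{2.5}) and applying the Jacobi identity so that everything collapses to $[Z_{rs},L_n]$. The only (immaterial) discrepancy is the sign: the paper writes $-\alpha[L_r,L_s]$ where your $+\alpha[L_r,L_s]$ appears to be the one consistent with (\ref{2.3}), but since $\alpha\neq 0$ this does not affect the equivalence.
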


\begin{proof}
Using (\ref{2.5}) and the assumption (\ref{2.6}) we see that the
left-hand side of (\ref{2.7}) takes the form
\begin{align*}
&  (RL_{r})_{t_{s}}-(RL_{s})_{t_{r}}+[RL_{r},RL_{s}]\\
&  =R[RL_{s},L_{r}]-R[RL_{r},L_{s}]+[RL_{r},RL_{s}]\\
&  =[RL_{r},RL_{s}]-R[L_{r},L_{s}]_{R}\stackrel{(\ref{2.3})}{=}-\alpha\lbrack
L_{r},L_{s}]
\end{align*}
which establishes the equivalence of (\ref{2.4}) and (\ref{2.7}).
To complete the proof it suffices to observe that 
the left-hand side of (\ref{2.4}) can be written as
\[
\begin{array}{rcl}
((L_{n})_{t_{r}})_{t_{s}}-((L_{n})_{t_{s}})_{t_{r}} &=&[RL_{r},L_{n}]_{t_{s}%
}-[RL_{s},L_{n}]_{t_{r}}\\
&=&[(RL_{r})_{t_{s}}-(RL_{s})_{t_{r}},L_{n}]+[RL_{r},[RL_{s},L_{n}]]\\
&&-[RL_{s},[RL_{r},L_{n}]]\\
&=&[(RL_{r})_{t_{s}}-(RL_{s})_{t_{r}}+[RL_{r},RL_{s}],L_{n}]\\
&=&0,
\end{array}
\]
where the last equality follows from (\ref{2.7}).
\end{proof}

Now we present a procedure of extending the systems under study by adding an
extra independent variable.
This procedure bears some resemblance to that of central extension, see
e.g.\ \cite{bla1,bla2,sem2} and references therein.\looseness=-1

Namely, we assume that all elements of $\mathfrak{g}$ depend on an additional
independent variable $y$ not involved in the Lie bracket, so all of the above
results
remain valid. Consider
an $\mathcal{L}\in\mathfrak{g}$ and the associated Lax hierarchies
defined by
\begin{equation}
\mathcal{L}_{t_{r}}=[RL_{r},\mathcal{L}]+(RL_{r})_{y},\qquad r\in\mathbb{N}.
\label{2.12}%
\end{equation}

\begin{thm}
\label{t2} Suppose that $\mathcal{L}\in\mathfrak{g}$ and $L_{i}\in
\mathfrak{g}$, $i\in\mathbb{N}$ are such that the zero-curvature equations
(\ref{2.7}) hold for all $r,s\in\mathbb{N}$, the $R$-matrix $R$ on
$\mathfrak{g}$ satisfies (\ref{2.6}), and equations (\ref{2.12}) hold for all
$r\in\mathbb{N}$.

Then the flows (\ref{2.12}) commute, i.e.,
\begin{equation}
(\mathcal{L}_{t_{r}})_{t_{s}}-(\mathcal{L}_{t_{s}})_{t_{r}}=0,\quad
r,s\in\mathbb{N}. \label{2.13}%
\end{equation}

\end{thm}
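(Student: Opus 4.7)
The plan is to mimic the commutativity argument from Theorem~\ref{t1}, but to carefully track the additional $y$-derivative terms that arise from the extra piece $(RL_r)_y$ in (\ref{2.12}). First I would differentiate (\ref{2.12}) with respect to $t_s$, apply (\ref{2.12}) a second time to replace $\mathcal{L}_{t_s}$ on the right, and then subtract the analogous expression with $r$ and $s$ interchanged. This gives
\begin{equation*}
(\mathcal{L}_{t_{r}})_{t_{s}}-(\mathcal{L}_{t_{s}})_{t_{r}}
=[(RL_{r})_{t_{s}}-(RL_{s})_{t_{r}},\mathcal{L}]
+\bigl[RL_{r},[RL_{s},\mathcal{L}]\bigr]-\bigl[RL_{s},[RL_{r},\mathcal{L}]\bigr]
+[RL_{r},(RL_{s})_{y}]-[RL_{s},(RL_{r})_{y}]
+\bigl((RL_{r})_{t_{s}}-(RL_{s})_{t_{r}}\bigr)_{y}.
\end{equation*}

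Next I would collapse the three terms that act through $[\,\cdot\,,\mathcal{L}]$. Using the Jacobi identity on the double brackets, these combine to
$[(RL_{r})_{t_{s}}-(RL_{s})_{t_{r}}+[RL_{r},RL_{s}],\mathcal{L}]$,
which vanishes by the zero-curvature hypothesis (\ref{2.7}). What remains is
\[
[RL_{r},(RL_{s})_{y}]-[RL_{s},(RL_{r})_{y}]
+\bigl((RL_{r})_{t_{s}}-(RL_{s})_{t_{r}}\bigr)_{y}.
\]
Here I would once more invoke (\ref{2.7}) to rewrite $(RL_{r})_{t_{s}}-(RL_{s})_{t_{r}}=-[RL_{r},RL_{s}]$ and then differentiate.

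The key (and really the only delicate) point is the following: because $y$ does not enter the Lie bracket on $\mathfrak{g}$, the derivation $\partial_y$ satisfies the Leibniz rule with respect to $[\cdot,\cdot]$, so $[RL_{r},RL_{s}]_{y}=[(RL_{r})_{y},RL_{s}]+[RL_{r},(RL_{s})_{y}]$. Substituting this identity, the remaining four bracket terms cancel pairwise by skew-symmetry, yielding (\ref{2.13}). I expect the main (and essentially only) obstacle to be bookkeeping: one must use (\ref{2.7}) in two different guises (once to eliminate the $[\cdot,\mathcal{L}]$ piece via Jacobi, and once after differentiating in $y$) and must explicitly appeal to the fact that $\partial_y$ is a derivation of the bracket, which is exactly the content of the assumption that $y$ is an additional variable not involved in the Lie structure.
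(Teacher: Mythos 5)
Your proposal is correct and follows essentially the same route as the paper: expand the mixed $t_r,t_s$-derivatives using (\ref{2.12}), collapse the double brackets via the Jacobi identity, use that $\partial_y$ is a derivation of the bracket to recognize the leftover terms as the $y$-derivative of the zero-curvature expression, and invoke (\ref{2.7}). The paper merely compresses all of this into the single identity $(\mathcal{L}_{t_{r}})_{t_{s}}-(\mathcal{L}_{t_{s}})_{t_{r}}=\left[(RL_{r})_{t_{s}}-(RL_{s})_{t_{r}}+[RL_{r},RL_{s}],\mathcal{L}\right]+\left((RL_{r})_{t_{s}}-(RL_{s})_{t_{r}}+[RL_{r},RL_{s}]\right)_{y}$, whose two constituents are exactly the two pieces you treat separately.
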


\begin{proof}
Using equations (\ref{2.12}) and the Jacobi identity for the Lie bracket we obtain
\begin{eqnarray*}
(\mathcal{L}_{t_{r}})_{t_{s}}-(\mathcal{L}_{t_{s}})_{t_{r}} &=&\left[ (RL_{r})_{t_{s}}-(RL_{s})_{t_{r}}+[RL_{r},RL_{s}],\mathcal{L}\right]  \\
&&+\left( (RL_{r})_{t_{s}}-(RL_{s})_{t_{r}}+[RL_{r},RL_{s}]\right) _{y}\\
&=&0.
\end{eqnarray*}
The right-hand side of the above equation vanishes by virtue of the zero curvature equations (\ref{2.7}).
\end{proof}

It is well known (see e.g.\ \cite{bl,bla1,sem,sem2}) that whenever
$\mathfrak{g}$ admits a decomposition into two Lie subalgebras $\mathfrak{g}%
_{+}$ and $\mathfrak{g}_{-}$ such that
\[
\mathfrak{g}=\mathfrak{g}_{+}\oplus\mathfrak{g}_{-},\qquad[\mathfrak{g}_{\pm
},\mathfrak{g}_{\pm}]\subset\mathfrak{g}_{\pm},\qquad\mathfrak{g}_{+}%
\cap\mathfrak{g}_{-}=\emptyset,
\]
the operator
\begin{equation}
R=\frac{1}{2}(P_{+}-P_{-})=P_{+}-\frac{1}{2} \label{2.14}%
\end{equation}
where $P_{\pm}$ are projectors onto $\mathfrak{g}_{\pm}$, satisfies the
classical modified Yang--Baxter equation (\ref{2.3}) with $\alpha=\frac{1}{4}%
$, i.e., $R$ defined by (\ref{2.14}) is a classical $R$-matrix.

Next, let us specify the dependence of $L_{j}$ on $y$ via the so-called
Lax--Novikov equations (cf.\ \cite{bla3} and references therein)
\begin{equation}
[L_{j},\mathcal{L}]+(L_{j})_{y}=0,\qquad j\in\mathbb{N}. \label{2.11}%
\end{equation}
Then, upon applying (\ref{2.4}), (\ref{2.14}) and (\ref{2.11}), equations
(\ref{2.5}), (\ref{2.7}) and (\ref{2.12}) are readily seen to take
the following form:
\begin{equation}
(L_{s})_{t_{r}}=[B_{r},L_{s}],\qquad r,s\in\mathbb{N}, \label{2.15}%
\end{equation}%
\begin{equation}
(B_{r})_{t_{s}}-(B_{s})_{t_{r}}+[B_{r},B_{s}]=0, \label{2.16}%
\end{equation}%
\begin{equation}
\mathcal{L}_{t_{r}}=[B_{r},\mathcal{L}]+(B_{r})_{y},\qquad n,r\in\mathbb{N}
\label{2.17}%
\end{equation}
where $B_{i}=P_{+}L_{i}$.

Obviously, if upon the reduction to the case when all quantities are
independent of $y$ we put $\mathcal{L}=L_{n}$ for some $n\in\mathbb{N}$, then
the hierarchies (\ref{2.12}) boil down to hierarchies (\ref{2.5}) and the
Lax--Novikov equations (\ref{2.11}) reduce to (a part of) the commutativity
conditions (\ref{2.4}). In particular, if the bracket $[\cdot,\cdot]$ is such
that equations (\ref{2.12}) give rise to integrable systems in $d$ independent
variables, then equations (\ref{2.5}) yield integrable systems in $d-1$
independent variables.

A standard construction of a commutative subalgebra spanned by $L_{i}$ whose
existence by Theorem~\ref{t1} ensures commutativity of the flows (\ref{2.12}) is, in
the case of Lie algebras which admit an additional associative multiplication
$\circ$ which obeys the Leibniz rule
\begin{equation}
\operatorname{ad}_{a}(b\circ c)=\operatorname{ad}_{a}(b)\circ c+b\circ
\operatorname{ad}_{a}(c)\Leftrightarrow\lbrack a,b\circ c]=[a,b]\circ
c+b\circ\lbrack a,c], \label{2.1}%
\end{equation}
as follows: the commutative subalgebra in question is generated by fractional
powers of a given element $L\in\mathfrak{g}$, cf.\ e.g.\ \cite{bla1,sem2} and
references therein.

However, in our setting, when we no longer assume existence of an associative multiplication on $\mathfrak{g}$ which obeys (\ref{2.1}),
the construction from the preceding paragraph does not work
anymore. In order to circumvent this difficulty, instead of an
explicit construction of commuting $L_{i}$ we will \emph{impose}
the zero-curvature constraints (\ref{2.7}) on chosen elements
$L_{i}\in\mathfrak{g}$, $i\in\mathbb{N}$; it is readily seen that
in the setting of Sections~\ref{(3+1)} and~\ref{fc} we are
interested in, this can be done in a consistent fashion. By
Theorem~\ref{t1} this guarantees the commutativity of $L_{i}$ for
any $R$-matrix which obeys the classical modified Yang--Baxter
equation (\ref{2.3}) with $\alpha\neq0$.\looseness=-1

\section{The contact bracket}

\label{scb}

Consider a commutative and associative algebra $A$ of formal series in $p$
\begin{equation}
A\ni f=\sum_{i}u_{i}p^{i} \label{3.1}%
\end{equation}
with the standard multiplication%
\begin{equation}
f_{1}\cdot f_{2}\equiv f_{1}f_{2},\qquad f_{1},f_{2}\in A. \label{3.2}%
\end{equation}
The coefficients $u_{i}$ of these series are assumed to be smooth functions of
$x,y,z$ and infinitely many times $t_{1},t_{2},\dots$.

The contact bracket on $A$
will be denoted by $\{\cdot,\cdot\}_{C}$ and is defined in the same fashion as
in \cite{s}, that is,
\begin{equation}
\{f_{1},f_{2}\}_{C}=\displaystyle\frac{\partial f_{1}}{\partial p}%
\frac{\partial f_{2}}{\partial x}-p\frac{\partial f_{1}}{\partial p}%
\frac{\partial f_{2}}{\partial z}+f_{1}\frac{\partial f_{2}}{\partial
z}-(f_{1}\leftrightarrow f_{2}). \label{cb}%
\end{equation}
Notice that the variable $y$ is not involved in this bracket.

If we drop the dependence on $z$ then this bracket reduces to the
canonical Poisson bracket in one degree of freedom,
\begin{equation}
\{f_{1},f_{2}\}_{P,1}=\frac{\partial f_{1}}{\partial p}\frac{\partial f_{2}%
}{\partial x}-\frac{\partial f_{2}}{\partial p}\frac{\partial f_{1}}{\partial
x}, \label{p0}%
\end{equation}
where the variable $x$ is canonically conjugated to $p$.\looseness=-1

Note that $A$ is not a Poisson algebra as the contact bracket (\ref{cb}) does
not obey the Leibniz rule. However, it belongs to a more general class of the
so-called Jacobi algebras (see e.g.\ \cite{gra} and references therein for
further details on those) that obey the following generalization of the
Leibniz rule:\looseness=-1
\begin{equation}
\{f_{1}f_{2},f_{3}\}_{C}=\{f_{1},f_{3}\}_{C}f_{2}+f_{1}\{f_{2},f_{3}%
\}_{C}-f_{1}f_{2}\{1,f_{3}\}_{C}. \label{jlr}%
\end{equation}

More precisely, a Jacobi algebra is an associative commutative algebra (i.e.,
a vector space endowed with an associative commutative multiplication which is
distributive with respect to addition and compatible with multiplication by
elements of the ground field) which is further endowed with the Lie algebra
structure that obeys the generalized Leibniz rule (\ref{jlr}). If the unity
$1$ belongs to the center of the Lie algebra in question, then (\ref{jlr}) boils
down to the usual Leibniz rule and the algebra under study is then just a
Poisson algebra.

Now let $\mathcal{A}$ be a Lie algebra of formal series in two variables
$p_{x}$ and $p_{z}$ whose coefficients again depend on $x,y,z,t_{1}%
,t_{2},\dots$ with respect to the standard Poisson bracket in two degrees of
freedom:
\begin{equation}
\{h_{1},h_{2}\}_{P}=\frac{\partial h_{1}}{\partial p_{x}}\frac{\partial h_{2}%
}{\partial x}+\frac{\partial h_{1}}{\partial p_{z}}\frac{\partial h_{2}%
}{\partial z}-(h_{1}\leftrightarrow h_{2}). \label{3.3}%
\end{equation}
It is readily checked that we have \cite{s} a Lie algebra homomorphism from
$A$ to $\mathcal{A}$
\begin{equation}
\label{m}f(p,x,y,z,t_{1},t_{2},\dots)\mapsto\bar{f}=p_{z}f(p_{x}%
/p_{z},x,y,z,t_{1},t_{2},\dots).
\end{equation}
Note, however, that when we lift this homomorphism to the Jacobi algebra
homomorphism, we have
\[
\overline{f_{1}f_{2}}=\frac{1}{p_{z}}\bar{f}_{1}\bar{f}_{2}.
\]

It is now readily seen that in fact we have the Jacobi algebra
\emph{isomorphism}, given by (\ref{m}), that goes from the Jacobi algebra
$(A,\{,\}_{C},\cdot)$, defined via (\ref{3.1}), (\ref{3.2}) and (\ref{cb}), to
the Jacobi algebra $(\bar{A},\{,\}_{P},\circ)$ of formal series of the form
\begin{equation}
h=\sum_{i}u_{i}p_{x}^{i}p_{z}^{-i+1}, \label{3.4}%
\end{equation}
which is a subalgebra of $(\mathcal{A},\{,\}_{P},\circ)$, where
\begin{equation}
h_{1}\circ h_{2}=\frac{1}{p_{z}}h_{1}h_{2}. \label{3.5}%
\end{equation}
Notice that the bracket (\ref{3.3}) is not a Poisson bracket on the algebra
$(\bar{A},\{,\}_{P},\circ)$ as it does not obey the Leibniz rule with respect
to the multiplication (\ref{3.5}).\looseness=-1

To make contact with the $R$-matrix approach of Section~\ref{rma}, we identify
$\mathfrak{g}$ with $A$ and the bracket $[\cdot,\cdot]$ in $\mathfrak{g}$
with the contact bracket (\ref{cb}). As for the choice of the splitting of
$\mathfrak{g}$ into Lie subalgebras $\mathfrak{g}_{\pm}$ with $P_{\pm}$ being
projections onto the respective subalgebras, so $\mathfrak{g}_{\pm}=P_{\pm
}(\mathfrak{g})$, it is readily checked that we have two natural choices when
the $R$'s defined by (\ref{2.14}) satisfy the classical modified Yang--Baxter
equation (\ref{2.3}) and thus are $R$-matrices. These two choices
are\looseness=-1
\[
P_{+}=P_{\geqslant k},
\]
where $k=0$ or $k=1$, and by definition
\[
P_{\geqslant k}\left(  \sum\limits_{j=-\infty}^{\infty}a_{j}p^{j}\right)
=\sum\limits_{j=k}^{\infty}a_{j}p^{j}.
\]

Note that, in contrast with the (1+1)-dimensional systems associated with the
Poisson bracket (\ref{p0}) with one degree of freedom \cite{bla4}, the choice
of $k=2$, i.e., taking $P_{\geqslant2}$ for $P_{+}$, does not yield an
$R$-matrix on $A$ via (\ref{2.14}), that is, in this case $R$ defined via
(\ref{2.14}) does not satisfy (\ref{2.3}).

\section{Integrable (3+1)-dimensional infinite-component hierarchies and their
reductions}

\label{(3+1)}

Consider first the case of $k=0$ and the $n$th order Lax function from $A$
\begin{equation}
\mathcal{L}=u_{n}p^{n}+u_{n-1}p^{n-1}+\cdots+u_{0}+u_{-1}p^{-1}+\cdots
,\qquad n>0 \label{5.1}%
\end{equation}
and let
\begin{equation}
B_{m}\equiv P_{+}L_{m}=v_{m,m}p^{m}+v_{m,m-1}p^{m-1}+\cdots+v_{m,0},\qquad m>0
\label{5.2}%
\end{equation}
where $u_{i}=u_{i}(\vec{t},x,y,z)$, $v_{m,j}=v_{m,j}(\vec{t},x,y,z)$, and
$\vec{t}=(t_{1},t_{2},\dots)$.

Substituting $\mathcal{L}$ and $B_{m}$ into the zero-curvature Lax equations
\begin{equation}
\mathcal{L}_{t_{m}}=\{B_{m},\mathcal{L}\}_{C}+(B_{m})_{y} \label{5.3}%
\end{equation}
we obtain a hierarchy of infinite-component systems of the form
\begin{equation}
\begin{array}{rl}
(u_{r})_{t_{m}}  &  =X_{r}^{m}[u,v_{m}],\qquad r\leq n+m,\quad r\neq
0,\dots,m,\\
(u_{r})_{t_{m}}  &  =X_{r}^{m}[u,v_{m}]+(v_{m,r})_{y},\qquad r=0,\dots,m.
\end{array}\label{5.13}
\end{equation}
where in (\ref{5.13}) we put $u_{r}\equiv0$ for $r>n$ and
\begin{equation}%
\begin{array}
[c]{rcl}%
X_{r}^{m}[u,v_{m}] & = & \displaystyle\sum\limits_{s=0}^{m}[sv_{m,s}%
(u_{r-s+1})_{x}-(r-s+1)u_{r-s+1}(v_{m,s})_{x}\\[5mm]
&  & \quad-(s-1)v_{m,s}(u_{r-s})_{z}+(r-s-1)u_{r-s}(v_{m,s})_{z}],
\end{array}
\label{5.5}%
\end{equation}
for $r\leq m+n$, $u=(u_{n},u_{n-1},\dots)$ and $v_{m}=(v_{m,0},\dots,v_{m,m})$.
The fields $u_{r}$ for $r\leq n$ are dynamical variables while
equations for $n+m\geq r>n$ can be seen as nonlocal constraints on $u_{r}$
which define the variables $v_{m,s}$. The reader has to bear in mind that the
additional dependent variables $v_{m,s}$ are by construction related to each
other for different $m$ through the zero-curvature equations (\ref{2.16}).

Upon using the homomorphism (\ref{m}) we see that the hierarchy (\ref{5.13}) can
also be generated by
\[
\bar{\mathcal{L}}=u_{n}p_{x}^{n}p_{z}^{-n+1}+u_{n-1}p_{x}^{n-1}p_{z}%
^{-n+2}+\cdots+u_{0}p_{z}+u_{-1}p_{x}^{-1}p_{z}^{2}+\cdots,
\]
\[
\bar{B}_{m}=v_{m,m}p_{x}^{m}p_{z}^{-m+1}+v_{m,m-1}p_{x}^{m-1}p_{z}%
^{-m+2}+\cdots+v_{m,0}p_{z},
\]
and the Lax equations
\[
\bar{\mathcal{L}}_{t_{m}}=\{\bar{B}_{m},\bar{\mathcal{L}}\}_{P}+(\bar{B}%
_{m})_{y}%
\]
with the Lie bracket (\ref{3.3}). The same procedure can be applied to the
other examples given below, but in what follows we shall stick to
the contact bracket formalism for the sake of simplicity. Let us
also point out that using the contact bracket
$\{\cdot,\cdot\}_{C}$ and the algebra $A$ instead of $\bar{A}$ and
the Poisson bracket $\{\cdot,\cdot\}_{P}$ naturally leads to
nonisospectral Lax representations for systems written in the form
of zero-curvature equations like (\ref{2.12}) or (\ref{2.17})
with $[\cdot ,\cdot]$ being the contact bracket, cf.\ Theorem~1 of
\cite{s} for details.\looseness=-1

The first equation from the system (\ref{5.13}), i.e., the one for $r=n+m$, takes
the form
\[
(n-1)u_{n}(v_{m,m})_{z}-(m-1)v_{m,m}(u_{n})_{z}=0,
\]
and hence, for $n>1,m>1$, admits the constraint
\begin{equation}
v_{m,m}=(u_{n})^{\frac{m-1}{n-1}}. \label{5.6}%
\end{equation}
For $n=1$ the constraint in question takes the form $u_{1}=\mathrm{const}$.

The system (\ref{5.1})--(\ref{5.5}) has a natural constraint: $u_{n}%
=c_{n},\ v_{m,m}=c_{m,m}$, where $c_{n},c_{m,m}\in\mathbb{R}$. Then, if we put
$c_{n}=c_{m,m}=1$, we have
\begin{equation}
\mathcal{L}=p^{n}+u_{n-1}p^{n-1}+\cdots+u_{0}+u_{-1}p^{-1}+\cdots,\quad n>0,
\label{5.6a}%
\end{equation}%
\begin{equation}
B_{m}\equiv P_{+}L_{m}=p^{m}+v_{m,m-1}p^{m-1}+\cdots+v_{m,0},\qquad m>0
\label{5.6b}%
\end{equation}
and equations (\ref{5.3}) take the form (\ref{5.13}), where now $r<n+m$ and
\begin{equation}%
\begin{array}
[c]{rcl}%
X_{r}^{m}[u,v_{m}] & = & m(u_{r-m+1})_{x}-(m-1)(u_{r-m})_{z}\\[3mm]
&  & +\displaystyle\sum\limits_{s=0}^{m-1}[sv_{m,s}(u_{r-s+1})_{x}%
-(r-s+1)u_{r-s+1}(v_{m,s})_{x}\\[5mm]
&  & \quad-(s-1)v_{m,s}(u_{r-s})_{z}+(r-s-1)u_{r-s}(v_{m,s})_{z}].
\end{array}
\label{5.6c}%
\end{equation}
Again, the first equation from the system (\ref{5.13}), i.e., the one for
$r=n+m-1$, takes the form
\[
(n-1)(v_{m,m-1})_{z}-(m-1)(u_{n-1})_{z}=0,
\]
so the system under study for $n>1$ admits a further constraint
\begin{equation}
v_{m,m-1}=\frac{(m-1)}{(n-1)}u_{n-1}. \label{5.6d}%
\end{equation}

It is readily seen that for $n=1$ the constraint (\ref{5.6d}) should be replaced by
$u_{0}=\mathrm{const}$. Consider this case in more
detail.

Upon taking $u_{0}=0$, the Lax equation (\ref{5.3}) for
\begin{equation}
\mathcal{L}=p+u_{-1}p^{-1}+u_{-2}p^{-2}+\cdots,\ \label{5.7a}%
\end{equation}
and for $m=2$, with
\[
\ B_{2}=p^{2}+v_{1}p+v_{0},
\]
generates the following infinite-component system
\begin{align}
(v_{1})_{y}  &  =(v_{1})_{x}+(u_{-1})_{z},\nonumber\\
(v_{0})_{y}  &  =(v_{0})_{x}+(u_{-2})_{z}-2(u_{-1})_{x}+2u_{-1}(v_{1}%
)_{z},\nonumber\\
(u_{r})_{t_{2}}  &  =2(u_{r-1})_{x}-(u_{r-2})_{z}-(r+1)u_{r+1}(v_{0}%
)_{x}+v_{0}(u_{r})_{z}\label{5.7}\\
&  ~~\ \ +(r-1)u_{r}(v_{0})_{z}+v_{1}(u_{r})_{x}-ru_{r}(v_{1})_{x}%
+(r-2)u_{r-1}(v_{1})_{z},\nonumber
\end{align}
where $r<0$ and $v_{2,r}\equiv v_{r}$.

We have a natural $(2+1)$-dimensional reduction of (\ref{5.7}) when
$u_{j},v_{0}$ and $v_{1}$ are independent of $y$,
\begin{equation}%
\begin{array}
[c]{rcl}%
0 & = & (v_{1})_{x}+(u_{-1})_{z},\\
0 & = & (v_{0})_{x}+(u_{-2})_{z}-2(u_{-1})_{x}+2u_{-1}(v_{1})_{z},\\
(u_{r})_{t_{2}} & = & 2(u_{r-1})_{x}-(u_{r-2})_{z}-(r+1)u_{r+1}(v_{0}%
)_{x}+v_{0}(u_{r})_{z}\\
&  & +(r-1)u_{r}(v_{0})_{z}+v_{1}(u_{r})_{x}-ru_{r}(v_{1})_{x}+(r-2)u_{r-1}%
(v_{1})_{z},
\end{array}
\label{5.8}%
\end{equation}
another $(2+1)$-dimensional reduction
\begin{align}
(v_{1})_{y}  &  =(u_{-1})_{z},\nonumber\\
(v_{0})_{y}  &  =(u_{-2})_{z}+2u_{-1}(v_{1})_{z},\label{5.8a}\\
(u_{r})_{t_{2}}  &  =-(u_{r-2})_{z}+v_{0}(u_{r})_{z}+(r-1)u_{r}(v_{0}%
)_{z}+(r-2)u_{r-1}(v_{1})_{z},\nonumber
\end{align}
when $u_{j},v_{0}$ and $v_{1}$ are independent of $x$, and yet another $(2+1)$-dimensional reduction
\begin{align}
(v_{1})_{y}  &  =(v_{1})_{x},\nonumber\\
(v_{0})_{y}  &  =(v_{0})_{x}-2(u_{-1})_{x},\label{5.9}\\
(u_{r})_{t_{2}}  &  =2(u_{r-1})_{x}-(r+1)u_{r+1}(v_{0})_{x}+v_{1}(u_{r}%
)_{x}-ru_{r}(v_{1})_{x},\nonumber
\end{align}
when $u_{j},v_{0}$ and $v_{1}$ are independent of $z$.

Moreover, system (\ref{5.9}) admits a further
reduction $v_{1}=0$ to the form
\begin{align}
(v_{0})_{y}  &  =(v_{0})_{x}-2(u_{-1})_{x},\label{5.9a}\\
(u_{r})_{t_{2}}  &  =2(u_{r-1})_{x}-(r+1)u_{r+1}(v_{0})_{x}+v_{1}(u_{r}%
)_{x}.\nonumber
\end{align}
The system (\ref{5.9a}) reduces to $(1+1)$-dimensional Benney system
(cf.\ e.g.\ \cite{bl-pla, bla4})
\begin{equation}
(u_{r})_{t_{2}}=2(u_{r-1})_{x}-2(r+1)u_{r+1}(u_{-1})_{x},\quad r<0,
\label{5.10}%
\end{equation}
when $u_{i}$ are independent of \emph{both} $y$ and $z$, and we put $v_{0}=2u_{-1}$.

On the other hand, system (\ref{5.8a}) admits no reductions to $(1+1)$-dimensional systems. Note
that for systems (\ref{5.7})--(\ref{5.10}) there are no obvious
finite-component reductions.

For systems with the Lax functions (\ref{5.1}), (\ref{5.2}) and
(\ref{5.6a}), (\ref{5.6b}) we have $(2+1)$-dimensional and
$(1+1)$-dimensional reductions of the same types as above.

Now pass to the case of $k=1$, when $P_{+}=P_{\geqslant1}$, and consider the
general case when
\begin{align}
\mathcal{L}  &  =u_{n}p^{n}+u_{n-1}p^{n-1}+\cdots+u_{0}+u_{-1}p^{-1}+\dots,\quad n>0,\nonumber\\
B_{m}  &  =v_{m,m}p^{m}+v_{m,m-1}p^{m-1}+\cdots+v_{m,1}p,\quad m>0,
\label{5.27}%
\end{align}
from which we again obtain the hierarchies of infinite-component systems
\begin{equation}
\hspace*{-10mm}%
\begin{array}
[c]{rcl}%
(u_{r})_{t_{m}} & = & X_{r}^{m}[u,v_{m}],\qquad r\leq n+m,\quad r\neq
1,\dots,m,\\
(u_{r})_{t_{m}} & = & X_{r}^{m}[u,v_{m}]+(v_{m,r})_{y},\qquad r=1,\dots,m,
\end{array}
\label{5.28a}%
\end{equation}
where in (\ref{5.28a}) we put $u_{r}\equiv0$ for $r>n$ and
\begin{equation}%
\begin{array}
[c]{rcl}%
X_{r}^{m}[u,v_{m}] & = & \displaystyle\sum\limits_{s=1}^{m}[sv_{m,s}%
(u_{r-s+1})_{x}-(r-s+1)u_{r-s+1}(v_{m,s})_{x}\\[5mm]
&  & \quad-(s-1)v_{m,s}(u_{r-s})_{z}+(r-s-1)u_{r-s}(v_{m,s})_{z}],
\end{array}
\label{5.28b}%
\end{equation}
for $r\leq m+n$, $u=(u_{n},u_{n-1},\dots)$ and $v_{m}=(v_{m,1},\dots,v_{m,m})$.

For $n>1,m>1$ we again obtain the constraint (\ref{5.6}), and for $n=1$ the
constraint in question is replaced by $u_{1}=\mathrm{const}$.
Consider in more detail the simplest case when
\begin{equation}
\mathcal{L}=p+u_{0}+u_{-1}p^{-1}+\cdots\label{5.17}%
\end{equation}
and
\begin{equation}
B_{m}\equiv P_{+}L_{m}=v_{m,m-1}p^{m}+v_{m,m-2}p^{m-1}+\dots+v_{m,1}p,\qquad
m>1. \label{5.18}%
\end{equation}

The first flow for $m=2$, where we put $v_{2,r}\equiv v_{r}$ to
simplify writing, takes the form
\begin{align}
(v_{2})_{y}  &  =(v_{2})_{x}+u_{0}(v_{2})_{z}+v_{2}(u_{0})_{z},\nonumber\\
(v_{1})_{y}  &  =(v_{1})_{x}+u_{0}(v_{1})_{z}+v_{2}(u_{-1})_{z}+2u_{-1}%
(v_{2})_{z}-2v_{2}(u_{0})_{x},\nonumber\\
(u_{r})_{t_{2}}  &  =v_{1}(u_{r})_{x}-ru_{r}(v_{1})_{x}+(r-2)u_{r-1}%
(v_{1})_{z}+2v_{2}(u_{r-1})_{x}\label{5.22}\\
&  \quad-(r-1)u_{r-1}(v_{2})_{x}-v_{2}(u_{r-2})_{z}+(r-3)u_{r-2}(v_{2}%
)_{z}.\nonumber
\end{align}
We have a natural $(2+1)$-dimensional reduction of (\ref{5.22}) when
$u_{j},v_{1}$ and $v_{2}$ are independent of $y$,
\begin{equation}%
\begin{array}
[c]{rcl}%
0 & = & (v_{2})_{x}+u_{0}(v_{2})_{z}+v_{2}(u_{0})_{z},\\
0 & = & (v_{1})_{x}+u_{0}(v_{1})_{z}+v_{2}(u_{-1})_{z}+2u_{-1}(v_{2}%
)_{z}-2v_{2}(u_{0})_{x},\\
(u_{r})_{t_{2}} & = & v_{1}(u_{r})_{x}-ru_{r}(v_{1})_{x}+(r-2)u_{r-1}%
(v_{1})_{z}+2v_{2}(u_{r-1})_{x}\\
&  & -(r-1)u_{r-1}(v_{2})_{x}-v_{2}(u_{r-2})_{z}+(r-3)u_{r-2}(v_{2})_{z}.
\end{array}
\label{22.a}%
\end{equation}
On the other hand, if $u_{j},v_{1}$ and $v_{2}$ are independent of $x$, we
obtain from (\ref{5.22}) a $(2+1)$-dimensional system
\begin{align}
(v_{2})_{y}  &  =u_{0}(v_{2})_{z}+v_{2}(u_{0})_{z},\nonumber\\
(v_{1})_{y}  &  =u_{0}(v_{1})_{z}+v_{2}(u_{-1})_{z}+2u_{-1}(v_{2}%
)_{z},\label{22b}\\
(u_{r})_{t_{2}}  &  =(r-2)u_{r-1}(v_{1})_{z}-v_{2}(u_{r-2})_{z}+(r-3)u_{r-2}%
(v_{2})_{z}.\nonumber
\end{align}
Finally, if $u_{j},v_{1}$ and $v_{2}$ in (\ref{5.22}) are independent of $z$, we arrive at a $(2+1)$-dimensional system
\begin{align}
(v_{1})_{y}  &  =(v_{1})_{x},\nonumber\\
(v_{0})_{y}  &  =(v_{0})_{x}-2v_{1}(u_{0})_{x},\label{5.22d}\\
(u_{r})_{t_{2}}  &  =v_{0}(u_{r})_{x}-ru_{r}(v_{0})_{x}+2v_{1}(u_{r-1}%
)_{x}-(r-1)u_{r-1}(v_{1})_{x},\nonumber
\end{align}
where we made use of an admissible reduction $v_{2}=\mathrm{const}=1$, and if we make a further reduction $v_{1}=\mathrm{const}=1$, we obtain
\begin{align}
(v_{0})_{y}  &  =(v_{0})_{x}-2(u_{0})_{x},\label{22c}\\
(u_{r})_{t_{2}}  &  =2(u_{r-1})_{x}+v_{0}(u_{r})_{x}-ru_{r}(v_{0}%
)_{x}.\nonumber
\end{align}
If $u_{j},v_{1}$ and $v_{2}$ are independent of both $y$ and $z$, we can put
$v_{1}=2u_{0}$ and obtain
\begin{equation}
(u_{r})_{t_{2}}=2(u_{r-1})_{x}+2u_{0}(u_{r})_{x}-2ru_{r}(u_{0})_{x}.
\label{22d}%
\end{equation}
Finally, when $u_{j},v_{1}$ and $v_{2}$ are independent of both $y$ and $x$, we
have
\begin{equation}
(u_{r})_{t_{2}}=(r-2)u_{r-1}(v_{1})_{z}-v_{2}(u_{r-2})_{z}+(r-3)u_{r-2}%
(v_{2})_{z}, \label{22e}%
\end{equation}
where a reduction
\[
v_{2}=au_{0}^{-1},\quad v_{1}=-au_{-1}u_{0}^{-2},
\]
was performed, and $a\in\mathbb{R}$ is an arbitrary constant. Thus, in this case the system under study
is rational (rather than polynomial) in $u_0$.

\section{Finite-component reductions}\label{fc}
For $k=0$, in contrast with the simplest case (\ref{5.7a}), we do
have natural reductions to finite-component systems by putting
$u_{r}=0$ for $r<1$ or $r<0$ in (\ref{5.1}) and (\ref{5.6a}), i.e.,
consider the cases
\begin{equation}%
\begin{array}
[c]{rcl}%
\mathcal{L} & = & u_{n}p^{n}+u_{n-1}p^{n-1}+\cdots+u_{r}p^{r},\quad r=0,1,\\[3mm]
B_{m} & = & (u_{n})^{\frac{m-1}{n-1}}p^{m}+v_{m,m-1}p^{m-1}+\cdots+v_{m,0}%
\end{array}
\label{5.14}%
\end{equation}
and
\begin{equation}%
\begin{array}
[c]{rcl}%
\mathcal{L} & = & p^{n}+u_{n-1}p^{n-1}+\cdots+u_{r}p^{r},\quad r=0,1,\\[5mm]
B_{m} & = & p^{m}+\displaystyle\frac{(m-1)}{(n-1)}u_{n-1}p^{m-1}+\cdots+v_{m,0}.
\end{array}
\label{5.14a}%
\end{equation}
The case (\ref{5.14a}) for $r=0$ was considered for the first time in
\cite{s}. Notice that in (\ref{5.14}) and (\ref{5.14a}) for $r=0$ we have
$\mathcal{L}=B_{n}$, and hence the variable $y$ can be identified with $t_{n}$.
Then equations (\ref{5.3}) coincide with the zero-curvature equations
(\ref{2.16}) and the Lax--Novikov equations (\ref{2.11}) reduce to equations
(\ref{2.15}).\looseness=-1

The structure of the said finite-component reductions is best revealed in the matrix
form of the system (\ref{5.13}). For the reduction (\ref{5.14a}) and $n\geq m$
we obtain
\begin{equation}%
\begin{array}
[c]{rcl}%
0 & = & A_{1}^{m}(u)(V_{m})_{z}+A_{2}^{m}(u)(V_{m})_{x}+A_{3}^{m}%
(v)(U_{m})_{z}+A_{4}^{m}(v)(U_{m})_{x},\\
(U_{n})_{t_{m}} & = & A_{1}^{n}(u)(V_{n})_{z}+A_{2}^{n}(u)(V_{n})_{x}%
+A_{3}^{n}(v)(U_{n})_{z}+A_{4}^{n}(v)(U_{n})_{x}+(V_{n})_{y},
\end{array}
\label{5.15}%
\end{equation}
where
\[
U_{m}=(u_{n-m},\dots,u_{n-1})^{T},\quad V_{m}=(v_{m,0},\dots,v_{m,m-1})^{T},
\]%
\[
V_{n}=\underset{n}{\underbrace{(v_{m,0},\dots,v_{m,m-1},0,\dots,0)}}^{T},\quad
U_{n}=(u_{0},\dots,u_{n-1})^{T},
\]
$A_{i}^{m}$ and $A_{i}^{n}$ are respectively $m\times m$ and $n\times n$
square matrices, and, as usual, the superscript $T$ indicates the transposed matrix.
The entries of the matrices in question are linear in the fields
$u_{i}$ and $v_{m,s}$.

On the other hand, for $n<m$ we have
\begin{equation}%
\begin{array}
[c]{rcl}%
0 & = & B_{1}^{m}(u)(V_{m})_{z}+B_{2}^{m}(u)(V_{m})_{x}+B_{3}^{m}%
(v)(U_{m})_{z}+B_{4}^{m}(v)(U_{m})_{x}+(V_{m,n})_{y},\\[3mm]%
(U_{n})_{t_{m}} & = & B_{1}^{n}(u)(V_{n})_{z}+B_{2}^{n}(u)(V_{n})_{x}%
+B_{3}^{n}(v)(U_{n})_{z}+B_{4}^{n}(v)(U_{n})_{x}+(V_{n})_{y},
\end{array}
\label{5.16}%
\end{equation}
where
\[
V_{m}=(v_{m,0},\dots,v_{m,m-1})^{T},\quad U_{m}=\underset{m}{\underbrace
{(u_{0},\dots,u_{n-1},0,\dots,0)}}^{T},
\]%
\[
V_{m,n}=\underset{m}{\underbrace{(v_{m,n},\dots,v_{m,m-1},0,\dots,0)}}^{T},
\]%
\[
U_{n}=(u_{0},\dots,u_{n-1})^{T},\quad V_{n}=(v_{m,0},\dots,v_{m,n-1})^{T}.
\]
The structure of the matrices $B_{i}^{(j)}$ is essentially the same as that of
the matrices $A_{i}^{(j)}$ above.

Another class of natural reductions
to finite-component systems
arises for $k=1$, if we put%
\begin{equation}%
\begin{array}
[c]{rcl}%
\mathcal{L} & = & u_{n}p^{n}+u_{n-1}p^{n-1}+\cdots+u_{r}p^{r},\quad
r=1,0,-1,\cdots\\
B_{m} & = & (u_{n})^{\frac{m-1}{n-1}}p^{m}+v_{m,m-1}p^{m-1}+\cdots+v_{m,1}p
\end{array}
\label{5.22a}%
\end{equation}
or%
\begin{align}
\mathcal{L}  &  =p+u_{0}+u_{-1}p^{-1}+\cdots+u_{r}p^{r},\quad r=0,1,-1,\dots
\label{5.23}\\
B_{m}  &  =v_{m,m}p^{m}+v_{m,m-1}p^{m-1}+\dots+v_{m,1}p,\qquad m>1.\nonumber
\end{align}
For instance, let
\begin{equation}
\mathcal{L}=p+u_{0}+u_{-1}p^{-1} \label{5.24}%
\end{equation}
and, with a slight variation of the earlier notation, put
\[
B_{2}=v_{2}p^{2}+v_{1}p,\quad B_{3}=w_{3}p^{3}+w_{2}p^{2}+w_{1}p.
\]
The member of the hierarchy associated with $B_{2}$ reads%
\begin{align}
(u_{-1})_{t_{2}}  &  =u_{-1}(v_{1})_{x}+v_{1}(u_{-1})_{x},\nonumber\\
(u_{0})_{t_{2}}  &  =-2u_{-1}(v_{1})_{z}+v_{1}(u_{0})_{x}+u_{-1}(v_{2}%
)_{x}+2v_{2}(u_{-1})_{x},\nonumber\\
(v_{1})_{y}  &  =(v_{1})_{x}+2u_{-1}(v_{2})_{z}+v_{2}(u_{-1})_{z}+u_{0}%
(v_{1})_{z}-2v_{2}(u_{0})_{x},\label{ut2}\\
(v_{2})_{y}  &  =(v_{2})_{x}+u_{0}(v_{2})_{z}+v_{2}(u_{0})_{z},\nonumber
\end{align}
and the one associated with $B_{3}$ has the form
\begin{align}
(u_{-1})_{t_{3}}  &  =u_{-1}(w_{1})_{x}+w_{1}(u_{-1})_{x},\nonumber\\
(u_{0})_{t_{3}}  &  =w_{1}(u_{0})_{x}-2u_{-1}(w_{1})_{z}+u_{-1}(w_{2}%
)_{x}+2w_{2}(u_{-1})_{x},\nonumber\\
(w_{1})_{y}  &  =(w_{1})_{x}+w_{2}(u_{-1})_{z}-u_{-1}(w_{3})_{x}-2w_{2}%
(u_{0})_{x}+2u_{-1}(w_{2})_{z}\nonumber\\
&  +u_{0}(w_{1})_{z}-3w_{3}(u_{-1})_{x},\label{ut3}\\
(w_{2})_{y}  &  =(w_{2})_{x}-3w_{3}(u_{0})_{x}+2w_{3}(u_{-1})_{z}+w_{2}%
(u_{0})_{z}+u_{0}(w_{2})_{z}+2u_{-1}(w_{3})_{z},\nonumber\\
(w_{3})_{y}  &  =(w_{3})_{x}+u_{0}(w_{3})_{z}+2w_{3}(u_{0})_{z},\nonumber
\end{align}

Commutativity of the flows associated with $t_{2}$ and $t_{3}$,
i.e.,
\[
\left((u_{i})_{t_{2}}\right)_{t_{3}}=\left((u_{i})_{t_{3}}\right)_{t_{2}},\quad i=0,1,
\]
can be readily checked using the set of relations
\begin{equation}%
\begin{array}
[c]{rcl}%
(v_{1})_{z} & = & -\displaystyle\frac{v_{2}}{w_{3}}(w_{3})_{x}-\frac
{v_{2}w_{2}}{4w_{3}^{2}}(w_{3})_{z}+\frac{v_{2}}{2w_{3}}(w_{2})_{z}+\frac
{3}{2}(v_{2})_{x},\\[5mm]%
(v_{2})_{z} & = & \displaystyle\frac{v_{2}}{2w_{3}}(w_{3})_{z},\\[5mm]%
(w_{1})_{t_{2}} & = & v_{1}(w_{1})_{x}-w_{1}(v_{1})_{x}+(v_{1})_{t_{3}%
},\\[5mm]%
(w_{2})_{t_{2}} & = & v_{1}(w_{2})_{x}-w_{1}(v_{2})_{x}+2v_{2}(w_{1}%
)_{x}-2w_{2}(v_{1})_{x}+(v_{2})_{t_{3}},\\[5mm]%
(w_{3})_{t_{2}} & = & \displaystyle\frac{v_{2}w_{2}}{2w_{3}}(w_{2})_{z}%
-\frac{w_{2}}{2}(v_{2})_{x}-\frac{v_{2}w_{2}^{2}}{4w_{3}^{2}}(w_{3})_{z}%
+\frac{(v_{1}w_{3}-v_{2}w_{2})}{w_{3}}(w_{3})_{x}\\[5mm]
&  & -v_{2}(w_{1})_{z}+2v_{2}(w_{2})_{x}-3w_{3}(v_{1})_{x},\nonumber
\end{array}
\label{zcrt2t3}%
\end{equation}
which is equivalent to the zero-curvature equation
\begin{equation}
(B_{2})_{t_{3}}-(B_{3})_{t_{2}}+\{B_{2},B_{3}\}_{C}=0. \label{zcr-t2t3}%
\end{equation}
Note that the compatibility conditions
\[
\left((v_{i})_{y}\right)_{z}=\left((v_{i})_{z}\right)_{y},\quad
i=1,2,
\]
are also satisfied by virtue of (\ref{ut2}) and (\ref{zcr-t2t3}).

When $u_{i}$ and $v_{j}$ are independent of $z$ we obtain $(2+1)$-dimensional
systems with additional constraints $v_{2}=\mathrm{const}=1$, $w_{3}=\mathrm{const}=1$
\begin{align}
(u_{-1})_{t_{2}}  &  =u_{-1}(v_{1})_{x}+v_{1}(u_{-1})_{x},\nonumber\\
(u_{0})_{t_{2}}  &  =v_{1}(u_{0})_{x}+2(u_{-1})_{x},\label{100}\\
(v_{1})_{y}  &  =(v_{1})_{x}-2(u_{0})_{x},\nonumber
\end{align}
and
\begin{align}
(u_{-1})_{t_{3}}  &  =u_{-1}(w_{1})_{x}+w_{1}(u_{-1})_{x},\nonumber\\
(u_{0})_{t_{3}}  &  =w_{1}(u_{0})_{x}+u_{-1}(w_{2})_{x}+2w_{2}(u_{-1}%
)_{x},\nonumber\\
(w_{1})_{y}  &  =(w_{1})_{x}-3(u_{-1})_{x}-2w_{2}(u_{0})_{x},\label{101}\\
(w_{2})_{y}  &  =(w_{2})_{x}-3(u_{0})_{x}.\nonumber
\end{align}

When $u_{i}$ and $v_{j}$ are independent of $x$ we obtain other $(2+1)$-dimensional systems
making use of a naturally arising extra constraint $u_{-1}=1$, namely,
\begin{align}
(u_{0})_{t_{2}}  &  =-2(v_{1})_{z},\nonumber\\
(v_{1})_{y}  &  =2(v_{2})_{z}+u_{0}(v_{1})_{z},\label{102}\\
(v_{2})_{y}  &  =(u_{0}v_{2})_{z}\nonumber
\end{align}
and
\begin{align}
(u_{0})_{t_{2}}  &  =-2(w_{1})_{z},\nonumber\\
(w_{1})_{y}  &  =2(w_{2})_{z}+u_{0}(w_{1})_{z},\nonumber\\
(w_{2})_{y}  &  =2(w_{3})_{z}+(u_{0}w_{2})_{z},\label{103}\\
(w_{3})_{y}  &  =u_{0}(w_{3})_{z}+2w_{3}(u_{0})_{z}.\nonumber
\end{align}

Further reduction of (\ref{100}) and (\ref{101}) by assuming that $u_i$, $v_j$ and $w_k$ are independent of $y$ leads to
$(1+1)$-dimensional systems of the form
\begin{align}
(u_{-1})_{t_{2}} &  =2(u_{-1}u_{0})_{x},\nonumber\\
(u_{0})_{t_{2}} &  =2(u_{-1}+u_{0}^{2})_{x},\label{104}%
\end{align}
with the constraint $v_{1}=2u_{0}$, and
\begin{align}
(u_{-1})_{t_{3}} &  =3(u_{-1}u_{0}^{2}+u_{-1}^{2})_{x},\nonumber\\
(u_{0})_{t_{3}} &  =(u_{0}^{3}+6u_{0}u_{-1})_{x},\label{105}%
\end{align}
with constraints
\[
w_{2}=3u_{0},\qquad w_{1}=3u_{0}^{2}+3u_{-1}.
\]

Likewise, the reduction of (\ref{102}) and (\ref{103}) by assuming that $u_i$, $v_j$ and $w_k$
are independent of $y$ leads to $(1+1)$-dimensional systems of the form
\begin{equation}
(u_{0})_{t_{2}}=2(u_{0}^{-2})_{z}, \label{106}%
\end{equation}
and
\begin{equation}
(u_{0})_{t_{3}}=-6(u_{0}^{-4})_{z}, \label{107}%
\end{equation}
as we have
\[
v_{2}=u_{0}^{-1},\quad v_{1}=-u_{0}^{-2},\quad w_{3}=u_{0}^{-2},\quad w_{2}%
=-2u_{0}^{-3},\quad w_{1}=3u_{0}^{-4}.
\]

The simplest nontrivial example of Lax pair (\ref{5.22a}) is given by
\begin{align*}
\mathcal{L}  &  =u_{3}p^{3}+u_{2}p^{2}+u_{1}p,\\
B_{2}  &  =v_{2}p^{2}+v_{1}p,
\end{align*}
and the associated system reads
\begin{equation}
\hspace*{-10mm}%
\begin{array}
[c]{rcl}%
0 & = & 2u_{3}(v_{2})_{z}-v_{2}(u_{3})_{z},\\
0 & = & u_{2}(v_{2})_{z}-v_{2}(u_{2})_{z}+2u_{3}(v_{1})_{z}+2v_{2}(u_{3})_{x}-3u_{3}(v_{2})_{x}\nonumber\\
(u_{3})_{t_{2}} & = & v_{1}(u_{3})_{x}+2v_{2}(u_{2})_{x}-2u_{2}(v_{2})_{x}-3u_{3}(v_{1})_{x}-v_{2}(u_{1})_{z}+u_{2}(v_{1})_{z},\\
(u_{2})_{t_{2}} & = & (v_{2})_{y}+v_{1}(u_{2})_{x}+2v_{2}(u_{1})_{x}-2u_{2}(v_{1})_{x}-u_{1}(v_{2})_{x},\\
(u_{1})_{t_{2}} & = & (v_{1})_{y}+v_{1}(u_{1})_{x}-u_{1}(v_{1})_{x}.
\end{array}
\label{5.25}%
\end{equation}
Here we have not yet imposed the constraint (\ref{5.6}).

The first two of the above equations impose constraints on the
`non-dynamical' fields $v_{1}$ and $v_{2}$.
The first of these constraints is satisfied once we impose (\ref{5.6}), i.e.,
$v_{2}=(u_{3})^{\frac{1}{2}}$, and then the second one boils down to
\[
(v_{1})_{z}=\left[  \frac{1}{2}u_{2}(u_{3})^{-\frac{1}{2}}\right]
_{z}-\left[\frac{1}{2}(u_{3})^{\frac{1}{2}}\right]_{x}.
\]

Assuming that $u_i$ and $v_j$ no longer depend on $z$ naturally leads to further constraints
\[
v_{2}=\mathrm{const}=1,\quad u_{3}=\mathrm{const}=1,\quad v_{1}=\frac{2}{3}u_{2}%
\]
and then we obtain an evolutionary system
\begin{align}
(u_{2})_{t_{2}}  &  =2(u_{1})_{x}-\frac{2}{3}u_{2}(u_{2})_{x},\nonumber\\
(u_{1})_{t_{2}}  &  =\frac{2}{3}[(u_{2})_{y}+u_{2}(u_{1})_{x}-u_{1}(u_{2}%
)_{x}]. \label{5.26}%
\end{align}
On the other hand, assuming that $u_i$ and $v_j$ no longer depend on $x$ yields
\begin{align}
(u_{3})_{t_{2}}  &  =u_{2}(v_{1})_{z}-v_{2}(u_{1})_{z},\nonumber\\
(u_{2})_{t_{2}}  &  =(v_{2})_{y},\label{5.27a}\\
(u_{1})_{t_{2}}  &  =(v_{1})_{y},\nonumber
\end{align}
where we have
\[
v_{2}=(u_{3})^{\frac{1}{2}},\quad v_{1}=\frac{1}{2}u_{2}(u_{3})^{-\frac{1}{2}%
}.
\]

The reduction of (\ref{5.26}) and (\ref{103}) by assuming that
the dependent variables involved are independent of $y$ leads to
a $(1+1)$-dimensional system
\begin{align}
(u_{2})_{t_{2}}  &  =2(u_{1})_{x}-\frac{2}{3}u_{2}(u_{2})_{x},\nonumber\\
(u_{1})_{t_{2}}  &  =\frac{2}{3}[u_{2}(u_{1})_{x}-u_{1}(u_{2})_{x}],
\label{5.30}%
\end{align}
while for (\ref{5.27a}) we are naturally led to imposing the constraints
\[
u_{1}=\mathrm{const}=0,\quad u_{2}=\mathrm{const}=1,
\]
and then we obtain the equation
\begin{equation}
(u_{3})_{t_{2}}=\frac{1}{2}\left((u_{3})^{-\frac{1}{2}}\right)_{z}.
\label{5.31}%
\end{equation}

In closing note that it would be interesting to find out whether the
hierarchies presented in this section could be reproduced using the recursion
operators in spirit of \cite{bl, m-s0,m-s,as,sro} and references therein.\looseness=-1

\subsection*{Acknowledgments}

The research of AS was supported in part by the Ministry of Education, Youth
and Sports of the Czech Republic (M\v{S}MT \v{C}R) under RVO funding for
I\v{C}47813059, and by the Grant Agency of the Czech Republic (GA \v{C}R)
under grant P201/12/G028.

AS gratefully acknowledges the warm hospitality extended to him in the course
of his visits to the Adam Mickiewicz University in Pozna\'n.

The authors would like to thank B.M. Szablikowski for helpful comments.



\begin{thebibliography}{99}                                                                                               %

\bibitem{a}A. Alexandrov, Enumerative Geometry, Tau-Functions and Heisenberg--Virasoro Algebra,
Comm. Math. Phys. 338 (2015), 195--249

\bibitem{ba}Z. Bajnok, R. Janik, String field theory vertex from
integrability,
JHEP 04 (2015) 042, arXiv:1501.04533

\bibitem{bl}M. B\l aszak, Multi-Hamiltonian theory of dynamical systems,
Springer-Verlag, Berlin, 1998.

\bibitem{bl-pla}M. B\l aszak, Classical R-matrices on Poisson algebras and
related dispersionless systems, Phys. Lett. A 297 (2002),
191--195.

\bibitem{bla1}M. B\l aszak, B. Szablikowski, Classical R-matrix theory for
bi-Hamiltonian field systems, J. Phys. A: Math. Theor. 42 (2009), article
404002, arXiv:0902.1511

\bibitem{bla2}M. B\l aszak, A. Szum, Lie algebraic approach to the
construction of (2+1)-dimensional lattice-field and field integrable
Hamiltonian equations, J. Math. Phys. 35 (2001), 225--259.

\bibitem{bla4}M. B\l aszak, B. Szablikowski, Classical R-matrix theory of
dispersionless systems: I. (1+1)-dimension theory, J. Phys. A: Math. Gen. 35
(2002), 10325--10344, arXiv:nlin/0211008

\bibitem{bla3}M. B\l aszak, B. Szablikowski, Classical R-matrix theory of
dispersionless systems: II. (2+1)-dimension theory, J. Phys. A: Math. Gen. 35
(2002), 10345--10364, arXiv:nlin/0211018

\bibitem{bo}A. Borowiec, H. Kyono, J. Lukierski, J. Sakamoto, K. Yoshida,
Yang--Baxter sigma models and Lax pairs arising from $\kappa$-Poincar\'e $r$-matrices, JHEP 04 (2016) 079,
arXiv:1510.03083

\bibitem{br}A. Brandhuber, B. Penante, G. Travaglini, D. Young, Integrability
and unitarity, JHEP 05 (2015) 005, arXiv:1502.06627v2

\bibitem{c}G. Carlet, H. Posthuma, S. Shadrin, Bihamiltonian Cohomology of KdV Brackets,
Comm. Math. Phys. 341 (2016), 805--819

\bibitem{co}P. Cooper, S. Dubovsky, V. Gorbenko, A. Mohsen, and S. Storace,
Looking for integrability on the worldsheet of confining strings,
JHEP 04 (2015) 127, arXiv:1411.0703

\bibitem{dun}M. Dunajski, Solitons, instantons, and twistors,
Oxford University Press, Oxford, 2010.

\bibitem{F1}E.V. Ferapontov, K.R. Khusnutdinova, On the integrability of
(2+1)-dimensional quasilinear systems, Commun. Math. Phys. 248 (2004), 187--206.

\bibitem{F2}E.V. Ferapontov, B.S. Kruglikov, Dispersionless integrable
systems in 3D and Einstein--Weyl geometry, J. Differential Geom. 97 (2014),
no. 2, 215--254.


\bibitem{gra}J. Grabowski, Brackets, Int. J. Geom. Methods Mod. Phys. 10
(2013), no. 8, 1360001




\bibitem{m-s0}M. Marvan, A. Sergyeyev, Recursion operator for the stationary
Nizhnik--Veselov--Novikov equation, J. Phys. A: Math. Gen. 36 (2003), no. 5,
L87--L92, arXiv:nlin/0210028.

\bibitem{m-s}M. Marvan, A. Sergyeyev, Recursion operators for dispersionless
integrable systems in any dimension, Inverse Problems 28 (2012), no. 2,
025011, arXiv:1107.0784

\bibitem{sem}M.A. Semenov-Tian-Shansky, What is a classical r-matrix?, Func.
Anal. Appl. 17 (1983), 259--272.\looseness=-1

\bibitem{sem2}M. Semenov-Tian-Shansky, Integrable systems: the r-matrix
approach, Preprint RIMS-1650, Kyoto, 2008

\bibitem{as}A. Sergyeyev, A strange recursion operator demystified,
J. Phys. A: Math. Gen. 38 (2005), no. 15, L257--L262, arXiv:nlin/0406032.

\bibitem{s}A. Sergyeyev, A new class of (3+1)-dimensional integrable systems
related to contact geometry, arXiv:1401.2122v3

\bibitem{sro}A. Sergyeyev, Recursion operators for multidimensional
integrable systems, arXiv:1501.01955\looseness=-1

\bibitem{sz}B.M. Szablikowski, Hierarchies of Manakov--Santini type by means
of Rota--Baxter and other identities, SIGMA 12 (2016), 022, arXiv:1512.05817

\bibitem{t}K. Takasaki, T. Takebe, Integrable Hierarchies and Dispersionless
Limit, Rev. Math. Phys. 07 (1995), 743--808, arXiv:hep-th/9405096

\bibitem{tu}C.-L. Terng, K. Uhlenbeck, Tau Functions and Virasoro Actions for soliton Hierarchies,
Comm. Math. Phys. 342 (2016), 117--150.

\bibitem{w}L. Wulff, On integrability of strings on symmetric spaces, JHEP 09
(2015) 115, arXiv:1505.03525


\bibitem{z}V.E. Zakharov, Dispersionless limit of integrable systems in 2+1
dimensions, in: Singular limits of dispersive waves (Lyon, 1991),
Plenum, New York, 1994, 165--174.
\end{thebibliography}
\end{document}